\begin{document}  \setcounter{page}{1}
\title[\hfilneg EJDE-2021/SI/01\hfil Nonlocal advection diffusion equation]
{The nonlocal advection diffusion equation and the two-slit experiment in quantum mechanics}

\author[Glenn Webb \hfil EJDE/Si/01 \hfilneg]
{Glenn Webb}

\address{Glenn Webb \newline
Mathematics Department,
Vanderbilt University,
Nashville, TN 37212 USA}
\email{glenn.f.webb@vanderbilt.edu}

\dedicatory{Dedicated to the memory of John W. Neuberger}

%\thanks{Published October 6, 2021.}
\subjclass[2020]{35J10, 35Q40}
\keywords{Nonlocal, advection, diffusion, Schr\"odinger equation, two-slit experiment, ensemble interpretation;}
%\hfill\break\indent European option; bilateral counterparty risk}

\begin{abstract}
A partial differential equation model is analyzed for the two-slit experiment of quantum mechanics.
The state variable of the equation is the probability density function of particle positions. The
equation has a diffusion term corresponding to the random movement of particles, and a
nonlocal advection term corresponding to the movement of particles in the transverse direction
perpendicular to their forward movement. The model is compared to the Schr\"odinger equation model of
the experiment. The model supports the ensemble interpretation of quantum mechanics.
\end{abstract}

\maketitle
\numberwithin{equation}{section}
\newtheorem{theorem}{Theorem}[section]
\newtheorem{lemma}[theorem]{Lemma}
\newtheorem{proposition}[theorem]{Proposition}
\newtheorem{corollary}[theorem]{Corollary}
\newtheorem{remark}[theorem]{Remark}
\newtheorem{definition}[theorem]{Definition}
\newtheorem{example}[theorem]{Example}
\allowdisplaybreaks

\newcommand{\iii}[1]{|\kern-0.25ex|\kern-0.25ex| #1 |\kern-0.25ex|\kern-0.25ex|}
% triple vertical bar (norm)

\tableofcontents

\section{Introduction} \label{s:Intro}

The two-slit experiment demonstrates the fundamental probabilistic nature of quantum mechanics. In this experiment quantum particles are projected forward toward a screen with two parallel slits, and then observed on a detection screen further downstream (Figure \ref{Fig1}). 
An interference diffraction pattern of regularly spaced intensities is registered on the detection screen. The highest. density occurs in the center of the detection screen (Figure \ref{Fig2}), which is not the sum of the patterns observed for single slits separately (\cite{Tonomura}). The observed fringe pattern for two slits is characteristic of wave phenomena.

\begin{figure}[htp]
\begin{center}
\includegraphics[width=4.8in,height=1.5in]{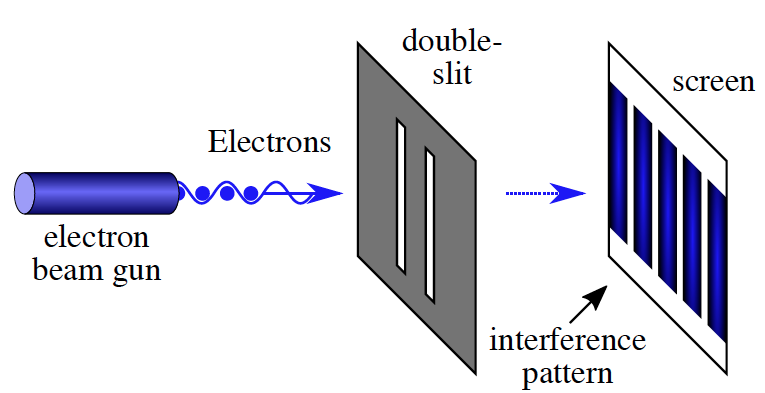}\\
\caption{The two slit experiment with electrons. wikipedia.org$/$wiki/Double-slit experiment}
\label{Fig1}
\end{center}
\end{figure}

\begin{figure}[htp]
\begin{center}
\includegraphics[width=4.6in,height=1.3in]{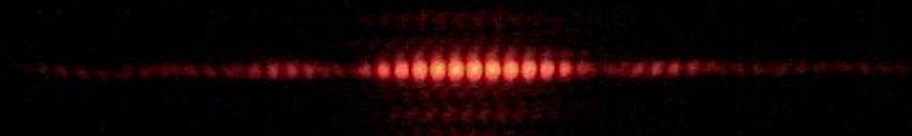}\\
\caption{The interference diffraction pattern of a two slit experiment with high order fringe patterns. wikipedia.org$/$wiki/Double-slit experiment}
\label{Fig2}
\end{center}
\end{figure}

In previous work we have investigated mathematical models for the two-slit experiment (\cite{Webb1}, \cite{Webb2}): the Schr\"odinger equation model (SE) and the nonlocal advection diffusion equation model (NLAD). In this work we extend these investigations to account for higher levels in the fringe patterns images in the experiment. We numerically simulate the SE and NLAD model outputs and compare the simulations to experimental data.

The organization of this paper is as follows:
In Section 2 we develop the SE model, in Section 3 we develop the NLAD model, in Section 4 we provide numerical simulations of both models and compare these outputs to experimental images, and in Section 5 we provide a summary of out work.

\section{The Schr\"odinger equation model }
\label{s:SE}

The one-dimensional time-dependent complex-valued 
Schr\"odinger equation is the foundational phenomenological model of quantum mechanics: 

\begin{equation}
\frac{\partial}{\partial t} \psi(x,t) \, = \, i \frac{\hbar}{2 m} \,  \frac{\partial^2}{\partial x^2}\psi (x,t), \, t>0, \, \, \psi(x,0) \, = \, \psi_0(x),  \, \, - \infty <  x <  \infty.  
\label{Eq2.1}
\end{equation}

\noindent
Here $\hbar$ is the reduced Planck's constant and $m$ is the particle mass, which without loss of generality can be assumed to satisfy $\frac{\hbar}{m}=1$. The interpretation of the solution is that $\int_{x_1}^{x_2} \rho(x,t) dx$ is the probability of finding a single particle in the interval $(x_1,x_2)$  at time $t$, where $\rho(x,t) =  | Re \, \psi(x,t) |^2 + | Im \, \psi(x,t) |^2$, and $\rho(x,0)$ 
is normalized so that $\int_{-\infty}^{\infty} \rho(x,0) dx = 1$. In this formulation, the interpretation of the state variable $\psi$ at time $t$ is sometimes applied to a single individual quantum particle. For this interpretation, the following questions arise:
What do the real and imaginary parts of $\psi$ represent for an individual particle? What is time $t$ in an experiment with randomly separated independent temporal events? What does the initial condition $\psi_0$ correspond to for single particles emitted one at a time? 

The general solution of  (\ref{Eq2.1}) (with scaling $\hbar / 2 m = 1$), is as in \cite{Goldstein}:

\begin{equation}
\psi(x,t)=\frac{1}{\sqrt{2 \pi i t}} \int_{-\infty}^{\infty} e^{{\frac{i (x-y)^2}{2t}}} \psi_0(y) dy,  \,\,\,
\psi_0 \in L^1((R;C),y^2 dy) \, \cap \, L^2(R;C),
\label{Eq2.2}
\end{equation}
where

\begin{equation}
\rho(x,t)=\frac{1}{2 \pi t} \Bigg( \bigg( \int_{-\infty}^{\infty} \cos \bigg( \frac{(x-y)^2}{2t} \bigg) Re \psi_0(y) - \sin \bigg( \frac{(x-y)^2}{2t} \bigg) Im \psi_0(y) dy \bigg)^2
\label{Eq2.3}
\end{equation}

\begin{equation}
+ \bigg( \int_{-\infty}^{\infty} \sin\bigg( \frac{(x-y)^2}{2t} \bigg) Re\psi_0(y) dy + \cos\bigg( \frac{(x-y)^2}{2t} \bigg) Im\psi_0(y) dy \bigg)^2 \Bigg), \\
\nonumber
\end{equation}
with 

\begin{equation}
\int_{-\infty}^{\infty} \rho(x,0) dx = \int_{-\infty}^{\infty} |\psi(x,0) |^2dx = 1,
\text{ which implies } \int_{-\infty}^{\infty} \rho(x,t) dx = 1, t \geq0.
\nonumber
\end{equation}

The probability amplitude $\rho(x,t)$  in  
(\ref{Eq2.3}) exhibits a two-phase pattern as $t$ advances. 
In the first phase, the initial information $\rho(x,0)$ evolves to an established pattern, in which the lower peaks in the fringe pattern lie almost on the $x$-axis.
In the second phase, this established pattern undergoes a space-time dilation as time advances. In the second phase the profile of $\rho(x,t)$  is propagated in the spatial $x$-direction at a constant speed. In \cite{Webb1} it is proved that for general initial data  $\psi_0$ in 
(\ref{Eq2.3}) the probability amplitude 
$\rho(x,t)$ in (\ref{Eq2.3}) satisfies the asymptotic space-time dilation property: uniformly for $x \in R, \, T>0, \, t \geq 1$
\begin{equation}
\Bigg| \rho(x,t  \,T) \,  - \,  \frac {1}{t} \rho(\frac{x}{t},T) \Bigg| \, \leq   \, \frac{\sqrt{2}}{\pi t T^2} 
\Bigg( \int_ {-\infty} ^{\infty} y^2 \, | \psi_0(y) | dy \Bigg) \Bigg( \int_ {-\infty} ^{\infty}  | \psi_0(y) | dy \Bigg),
\label{Eq2.4}
\end{equation}
which implies 
\begin{equation}
\lim_{t \rightarrow \infty}t \, \rho(x,t) = 
\lim_{T \rightarrow \infty}t T \, \rho(x,t T) = 
\frac{1}{2 \pi}  | \int_ {-\infty}^{\infty} \psi_0(y) dy |^2;
\label{Eq2.5}
\end{equation}
uniformly for $x \in R$.

\subsection{The Schr\"odinger Equation with Step Function Initial Data}  

In \cite{Webb2} an example  for  the Schr\"odinger equation applied to the two-slit experiment was given with initial data consisting of two rectangular step functions, symmetric about the origin of the $x$-axis. In this example, the centers of the two rectangular strips are taken as $s = \pm 1$, which can be applied generally by scaling the spatial variable $x$. The probability density $\rho(x,t)$ for this example evolves from the first phase to the second phase.
In Figure \ref{Fig3}, $\rho(x,t)$ is illustrated at $t =1/\pi$, which is 
approximately the value of $t$ at which the transition from the first phase to the second phase occurs.

\begin{figure}[htp]
\begin{center}
\includegraphics[width=4.8in,height=1.5in]{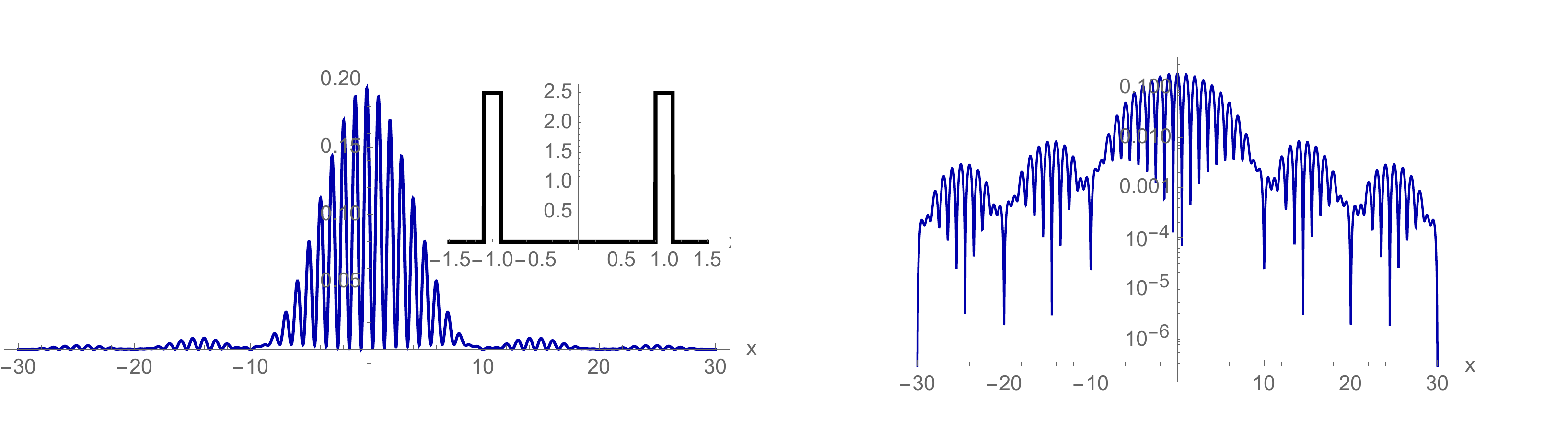}\\
\caption{$\rho(x,1/\pi)$ for $\rho(x,(0) =$ two rectangles with centers $s=\pm 1$, width $2 b, \, b = 0.1$, and height scaled so that 
$\int_{-s-b}^{-s+b}\rho(x,0) dx \,+ \, \int_{s-b}^{s+b}\rho(x,0) dx = 1$.
Left: $\rho(x,1 / \pi)$, the inset is the graph of $\rho(x,0)$.
The coordinates $x = \pm 10, \pm 20, \pm 30, \dots$
are local maxima for $\rho(x,1 / \pi)$ and $x \approx \pm 15, \pm 25, \pm 35, \dots $
are  local minima for $\rho(x,1 / \pi)$.
Right: $\log \rho(x,1 / \pi)$.}
\label{Fig3}
\end{center}
\end{figure}

\begin{remark}
In Figure \ref{Fig3} it is seen that the fringe pattern  has multiple interference pattern groupings on either side of a central principal  pattern. 
The diffraction pattern of $\rho(x,t)$ has infinitely many regularly spaced intervals
$$
\dots, [- 30, - 20], \, [- 20, - 10], [- 10, 10], [10 ,20],  [20, 30], \dots,
$$
with approximately regularly spaced interior fringes repeated in each of the intervals.
This illustration is consistent with the experimental data in Figure \ref{Fig2}.
The spacing is determined by $s/b = 10$.
\end{remark}

\section{The nonlocal advection diffusion equation model}
\label{s:NLAD}

An alternative to  the Schr\"odinger equation formulation of the two slit experiment is the nonlocal advection diffusion equation formulation NLAD. This formulation incorporates the movement at a given spatial $x$ coordinate as dependent on nearby spatial $x$ coordinates. Such models have been developed in \cite{Aharonov} and \cite{Webb2}. The NLAD model supports the ensemble interpretation of quantum mechanics, which maintains that mathematical description  of particle behavior  should correspond to communal particle behavior, rather than to individual particle behavior \cite{Einstein}.

In \cite{Aharonov} the authors examine an interpretation of the two-slit experiment based on the nonlocal interaction of a single particle with both slit openings. In their interpretation a single particle only passes through one slit, but with its corpuscular motion affected by the other slit. The authors provide a deterministic nonlocal dynamic equation of motion for the density of particles and relate the behavior of the solutions to the interference patterns observed in the two-slit experiment. This interpretation provides an alternative to the classical Schr\"odinger equation interpretation of this experiment as an ensemble wave of particles passing through both slits.

In the NLAD model in \cite{Webb2}, the nonlocal advection term represents  directed particle movement 
due to the influence of nearby particles, and the diffusion term represents variability of particle movement due to stochastic variation. 
The nonlocal advection term involves an integral corresponding to the slit separation widths $s$ in the $\pm x$-directions. 
In \cite{Webb2}, the NLAD equation analyzed was the following: 

\begin{align}
\frac{\partial}{\partial t} \omega(x,t) \, = 
\, \alpha \,  \frac{\partial^2}{\partial x^2} \omega(x,t) 
+   \frac{\partial}{\partial x} \int_{-s}^{s}
\, \beta_0 \, \omega(x+\hat{x},t) \frac{\hat{x}}{|\hat{x}|}d\hat{x} 
\hspace{3.7in}
\label{Eq3.1}
\\ 
=   \alpha \, \frac{\partial^2}{\partial x^2} \omega(x,t)  
+  \beta_0 \bigg( \omega(x + s,t)  - 2  \omega(x,t) + \omega(x - s,t) \bigg), 
\, t > 0, \, - \infty \, < x < \infty,
\hspace{2.5in}
\label{Eq3.2}
\end{align}
$$
\omega(x,0) \, = \, \omega_0(x), - \infty \, < x < \infty, \, \omega_0 \in L_+^1(-\infty,\infty), \, \int_{-\infty}^{\infty}\omega_0(x) dx =1. 
$$

In equation (\ref{Eq3.1}) $x$ is the spatial coordinate of particles, $t$ is the downstream distance perpendicular to the slits openings, 
$\alpha$ is the diffusion parameter, $\beta_0$ is the advection parameter, $2 s$ is the slit width, $\omega_0(x)$ is the initial data, and 
$\omega(x,t)$ is the probability density function for the distribution of particle positions. 
The values of $\alpha$ and $\beta_0$ are chosen so that the solutions of (\ref{Eq2.1}) and (\ref{Eq3.1}) are similar for $x \in [-20,20]$ at $t=1/\pi$.

In this work we analyze an extension of equation (\ref{Eq3.2}) to include higher order fringe pattern levels as seen in Figure (\ref{Fig2}). The extension of (\ref{Eq3.2}) to include higher order fringe pattern levels was illustrated in (\cite{Webb3}) with numerical examples.
Equation (\ref{Eq3.2}) is modified as follows:

\begin{align}
\frac{\partial}{\partial t} \omega(x,t) \, = 
\, \alpha \,  \frac{\partial^2}{\partial x^2} \omega(x,t) 
+   \frac{\partial}{\partial x} \int_{-s}^{s}
\, \beta_0 \, \omega(x+\hat{x},t) \frac{\hat{x}}{|\hat{x}|}d\hat{x} 
\hspace{4.4in} \\
 +   \frac{\partial}{\partial x} \int_{-\frac{3}{2b}}^{\frac{3}{2b}}
\beta_1 \, \omega(x+\hat{x},t) \frac{\hat{x}}{|\hat{x}|}d\hat{x} \,
+   \frac{\partial}{\partial x} \int_{-\frac{5}{2b}}^{\frac{5}{2b}}  \beta_2 \, \omega(x+\hat{x},t) \frac{\hat{x}}{|\hat{x}|}d\hat{x} 
\hspace{3.7in}
\nonumber
\\ 
=   \alpha \, \frac{\partial^2}{\partial x^2} \omega(x,t)  
+  \beta_0 \bigg( \omega(x + s,t)  - 2  \omega(x,t) + \omega(x - s,t) \bigg)   \hspace{3.7in}
\nonumber 
\\
+  \beta_1 \bigg( \omega(x + \frac{3 s}{2b},t)  - 2  \omega(x,t) + \omega(x - \frac{3 s}{2b},t) \bigg)
\hspace{5.0in}
\nonumber
\\
+  \beta_2 \bigg( \omega(x + \frac{5 s}{2b},t)  - 2  \omega(x,t) + \omega(x - \frac{5 s}{2b},t) \bigg),
\, t > 0, \, - \infty \, < x < \infty,
\label{Eq3.4}
\hspace{3.4in}
\\
\nonumber
\omega(x,0) \, = \, \omega_0(x), - \infty \, < x < \infty, \, \omega_0 \in L_+^1(-\infty,\infty), \, \int_{-\infty}^{\infty}\omega_0(x) dx =1.  \hspace{3.5in}
\end{align}
Equation (\ref{Eq3.4}) has two additional fringe pattern levels on either sided of the origin: 
$[- \frac{4 s}{b},- \frac{3 s}{b}]$,  
$[- \frac{3 s}{b},- \frac{2 s}{b}]$,
$[\frac{2 s}{b}, \frac{3 s}{b}]$,
$[\frac{3 s}{b}, \frac{4 s}{b}]$.
Additional levels can be added.

\subsection{Analysis of Equation (3.4)}

Let $X \, = \, L^1(-\infty,\infty)$, the space of integrable functions on $(-\infty,\infty)$ with norm
$\| f \| \, = \, \int_{-\infty}^{\infty}|f(x)| dx$. For $\sigma > 0$ let 

\begin{align}
&(T_{\sigma}(t)f)(x) = 
\label{Eq3.5}
\nonumber
\\
&\frac{1}{2 \, \sqrt{\sigma \, t}} \, \int_{-\infty}^{\infty}  \exp\bigg(- \frac{(x  -  y)^2}{4 \, \sigma \, t}\bigg) \, f(y) \, dy, \, \, f \in X, \, \, t \,  > \, 0, \, \, -\infty \, < x \, < \, \infty.
\end{align}
$T_{\sigma}(t), t \geq 0$ is a strongly continuous holomorphic semigroup of positive linear operators in $X$ with infinitesimal generator $(A_{\sigma} f)(x) = \sigma \, d^2 f(x) / d x^2$ satisfying $| T_{\sigma}(t) | \leq 1, t \geq 0$  (\cite{Kato}). 
Further, $(T_{\sigma}(t)f)(x)$ is the strong solution in $X$ to the diffusion equation 
\begin{equation}
\label{Eq3.6}
\frac{\partial}{\partial x}u(x,t) \, = \, \sigma \, \frac{\partial^2}{\partial x^2} u(x,t), \, \, u(x,0) \, = \, f(x), \, \, t \,  > \, 0, \, \, -\infty \, < x \, < \, \infty, \, f \in X.
\end{equation}
and (\cite{Yosida})
\begin{equation}
\label{Eq3.7}
\int_{-\infty}^{\infty} (T_{\sigma}(t) f) (x) dx \, = \,    \int_{-\infty}^{\infty} f(x) dx f \in X.
\end{equation}
For a bounded linear operator $B$ in $X$ define the exponential of $B$ as
$$ \exp(t B) f  \, = \, \sum_{n=0}^{\infty} \frac{(t B)^n}{n!} f, \, f \in X, \, t \geq 0. $$
Define the bounded linear operators $B_{0,\pm},  B_{1,\pm},  B_{2,\pm}$ in $X$ as follows:
for $f \in X$,  $-\infty  < x  <  \infty$,
$$(B_{0,\pm}f)(x)  = \beta_0 f(x \pm s),$$ 
$$(B_{1,\pm}f)(x)  = \beta_1 f(x \pm \frac{3 s}{2 b}),$$
$$(B_{2,\pm}f)(x)  = \beta_2 f(x \pm \frac{5 s}{2 b}).$$
Define the bounded linear operators $B_0, \, B_1, \, B_2$ in $X$ as follows:
$$(B_0 f)(x) = \beta_0 \bigg( f(x+s) - 2 f(x) + f(x-s) \bigg),$$
$$(B_1 f)(x) = \beta_1 \bigg( f(x+\frac{3 s}{2 b}) - 2 f(x) + f(x-\frac{3 s}{2 b})) \bigg),$$
$$(B_2 f)(x) = \beta_2 \bigg( f(x+s) - 2 f(x) + f(x-s) \bigg).$$
Then, since $B_{0,+}, \, B_{0,-}$, $B_{1,+}, \, B_{1,-}$, $B_{2,+}, \, B_{2,-}$ commute,
$$exp(t B_0) = e^{- 2 \beta_0 t} exp(t B_0) =  e^{- 2 \beta_0 t} exp(t B_{0,+}) exp(t B_{0,-}),$$
$$exp(t B_1) = e^{- 2 \beta_1 t} exp(t B_1) =  e^{- 2 \beta_1 t} exp(t B_{1,+}) exp(t B_{1,-}),$$
$$exp(t B_2) = e^{- 2 \beta_0 t} exp(t B_2) =  e^{- 2 \beta_2 t} exp(t B_{2,+}) exp(t B_{2,-}).$$

\begin{theorem}
Let $X \, = \, L^1(-\infty,\infty)$, let $\alpha, \, s, \, b, \, \beta_0, \beta_1, \beta_2 > 0$,
Let $T_{\alpha}(t),t \geq 0$ be the semigroup of linear operators as in equation (\ref{Eq3.5}).
The unique generalized solution of equation (\ref{Eq3.4}) is given by the strongly continuous semigroup of linear
operators $T(t), t \geq 0$ in $X$ with
\begin{equation}
\label{Eq3.8}
T(t) \omega_0 \, = \, T_{\alpha}(t) \, exp(t B_0) \, exp(t B_1) \, exp(t B_2) \, \omega_0, \,  t \geq 0, \, \omega_0 \in X.
\end{equation}
Further, if 
\begin{equation}
\label{Eq3.9}
\omega_0(x) \geq 0 \, \text{ a.e. on} \,  (-\infty,\infty) \, \text{and} \, \int_{-\infty}^{\infty} \omega_0(x) dx \,= \, 1,
\end{equation}
then
\begin{equation}
(T(t) \omega_0) (x) \geq 0 \, \text{ a.e. on} \,  (-\infty,\infty) \, \text{and} \, \int_{-\infty}^{\infty} (T(t) \omega_0)(x) dx \,= \, 1.
\label{Eq3.10}
\end{equation}
\end{theorem}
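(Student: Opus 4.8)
The plan is to read equation (\ref{Eq3.4}) as the abstract Cauchy problem
$\omega'(t) = (A_\alpha + B_0 + B_1 + B_2)\omega(t)$, $\omega(0) = \omega_0$, in $X$, where $A_\alpha$ is the generator of $T_\alpha(t)$ and $B := B_0 + B_1 + B_2$ is a \emph{bounded} operator on $X$ (being a finite linear combination of translations), and then to use the commuting–perturbation product formula to arrive at (\ref{Eq3.8}). The pointwise form of the advection terms on the right of (\ref{Eq3.4}) is exactly $(B_0\omega)(x) + (B_1\omega)(x) + (B_2\omega)(x)$, so the identification of the equation with this Cauchy problem is immediate.

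First I would check that $T_\alpha(t)$ commutes with each translation operator, hence with $B_{0,\pm}, B_{1,\pm}, B_{2,\pm}$ and therefore with $B_0, B_1, B_2$. Writing $\tau_a f = f(\cdot + a)$, the change of variable $y \mapsto y + a$ in (\ref{Eq3.5}) gives $(T_\alpha(t)\tau_a f)(x) = (T_\alpha(t)f)(x+a) = (\tau_a T_\alpha(t)f)(x)$, since $T_\alpha(t)$ is convolution with a Gaussian and convolutions commute with translations. The operators $B_0, B_1, B_2$ commute with one another for the same reason (translations commute), so $T_\alpha(t)$, $\exp(tB_0)$, $\exp(tB_1)$, $\exp(tB_2)$ all commute.

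Since $A_\alpha$ generates the $C_0$-semigroup $T_\alpha(t)$ and $B$ is bounded, the bounded-perturbation theorem shows $A_\alpha + B$ (with domain $D(A_\alpha)$) generates a unique $C_0$-semigroup $T(t)$ on $X$, which by definition furnishes the unique generalized solution of (\ref{Eq3.4}); uniqueness of the generalized solution is exactly uniqueness of the generated semigroup. Because $T_\alpha(t)$ commutes with $B$, the perturbed semigroup is the product $T(t) = T_\alpha(t)\exp(tB)$, and since $B_0, B_1, B_2$ commute, $\exp(tB) = \exp(tB_0)\exp(tB_1)\exp(tB_2)$; this is (\ref{Eq3.8}). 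Alternatively one can verify directly that the right side of (\ref{Eq3.8}) is a strongly continuous semigroup — commutativity makes $T(t+\tau) = T(t)T(\tau)$ transparent, strong continuity follows since the exponentials of bounded operators are norm-continuous, and the Leibniz rule at $t = 0$ gives generator $A_\alpha + B_0 + B_1 + B_2$.

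For (\ref{Eq3.10}), each factor in (\ref{Eq3.8}) is a positive operator: $T_\alpha(t)$ has the nonnegative Gaussian kernel; each $B_{i,\pm}$ is $\beta_i$ times a translation, hence positive, so $\exp(tB_{i,\pm}) = \sum_{n\ge0}(tB_{i,\pm})^n/n!$ is positive, and $\exp(tB_i) = e^{-2\beta_i t}\exp(tB_{i,+})\exp(tB_{i,-})$ is a positive multiple of a product of positive operators. A product of positive operators is positive, so $(T(t)\omega_0)(x) \ge 0$ a.e.\ when $\omega_0 \ge 0$ a.e. For conservation of mass, $\int_{-\infty}^{\infty}\exp(tB_{i,\pm})f = e^{\beta_i t}\int_{-\infty}^{\infty} f$ by translation-invariance of the integral, so $\int_{-\infty}^{\infty}\exp(tB_i)f = e^{-2\beta_i t}e^{2\beta_i t}\int_{-\infty}^{\infty} f = \int_{-\infty}^{\infty} f$, while $\int_{-\infty}^{\infty}T_\alpha(t)f = \int_{-\infty}^{\infty} f$ by (\ref{Eq3.7}); composing the four factors in (\ref{Eq3.8}) preserves $\int_{-\infty}^{\infty}(\cdot) = 1$. (Equivalently, $\int_{-\infty}^{\infty}(A_\alpha + B)g\,dx = 0$ on $D(A_\alpha)$, so $\frac{d}{dt}\int_{-\infty}^{\infty}(T(t)\omega_0)\,dx = 0$.) The only point needing care — and the single genuine obstacle — is the precise sense of "generalized solution" when $\omega_0 \notin D(A_\alpha)$; here one simply invokes the standard $C_0$-semigroup notion of mild solution and its uniqueness, after which everything else is a short computation.
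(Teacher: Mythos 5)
Your proposal is correct, and it is worth noting where it parts company with the paper. For the positivity claim you argue exactly as the paper does: positivity of the Gaussian semigroup, positivity of each $\exp(tB_{i,\pm})$ as a series of positive translation operators, and the factorization $\exp(tB_i)=e^{-2\beta_i t}\exp(tB_{i,+})\exp(tB_{i,-})$. The two real differences are these. First, for conservation of mass the paper verifies directly that $\int_{-\infty}^{\infty}(B_i^{\,n}\omega_0)(x)\,dx=0$ for every $n\geq 1$ (expanding the iterated second differences term by term) and then sums the exponential series, whereas you integrate the factored form and use $\int \exp(tB_{i,\pm})f = e^{\beta_i t}\int f$, so the normalization comes from $e^{-2\beta_i t}e^{\beta_i t}e^{\beta_i t}=1$; your route is shorter and leans on the same commuting factorization already displayed before the theorem, while the paper's computation avoids invoking the splitting again. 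Second, and more substantively, the paper's proof addresses only the positivity and normalization statements (3.9)--(3.10); it does not argue the first assertion, that $T(t)$ in (3.8) is the unique generalized solution of (3.4). You supply exactly that missing piece: identify (3.4) with the abstract Cauchy problem for $A_\alpha + B_0+B_1+B_2$, note that $B:=B_0+B_1+B_2$ is bounded, invoke the bounded-perturbation theorem for generation and uniqueness, and use commutation of $T_\alpha(t)$ with translations to write the perturbed semigroup as the product $T_\alpha(t)\exp(tB_0)\exp(tB_1)\exp(tB_2)$. That argument is sound (the interchange of $\int$ with the exponential series is justified because $f\mapsto\int f$ is a bounded functional on $L^1$ and the series converges in norm), and it makes explicit the sense of ``generalized solution'' (mild solution via the $C_0$-semigroup) that the paper leaves implicit.
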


\begin{proof}
For $\omega_0 \in X$, $\omega_0(x) \geq 0 \, \text{ a.e. on} \,  (-\infty,\infty)$,
$$(B_{0,+} \omega_0)(x)  = \beta_0 \,  \omega_0(x + s)  \,\geq 0 \, \text{ a.e. on} \,  (-\infty,\infty),$$
$$(B^2_{0,+} \omega_0)(x)  = B_{0,+} \, ( \beta_0 ( \omega_0(x + s) ) = \beta_0^2  \, \omega_0(x + 2 s) \,  \geq 0 \, \text{ a.e. on} \,  (-\infty,\infty),$$
$$(B^3_{0,+} \omega_0)(x)  = B^2_{0,+} \, ( \beta_0 ( \omega_0(x + s) )
= \beta_0^3  \, \omega_0(x + 3 s) \,  \geq 0 \, \text{ a.e. on} \,  (-\infty,\infty) \dots.$$
Thus, 
$$ \exp(t B_{0,+}) \omega_0  \, = \, \sum_{n=0}^{\infty} \frac{t^n \,  B_{0,+}^n}{n!} \omega_0  \geq 0 \, \text{ a.e. on} \,  (-\infty,\infty).$$
Similarly, $\exp(t B_{0,-}) \omega_0$, 
$\exp(t B_{1,+}) \omega_0$, 
$\exp(t B_{1,-}) \omega_0$, 
$\exp(t B_{2,+}) \omega_0$, 
$\exp(t B_{2,-}) \omega_0$ 
$\geq 0 \, \text{ a.e. on} \,  (-\infty,\infty)$.
From (\cite{Kato}) $T_{\alpha}(t) \omega_0 (x) \geq 0 \, \text{ a.e. on} \, (-\infty,\infty)$.
Thus, for $t \geq 0$,
$$T(t) \omega_0 \, = \, T_{\alpha}(t) \, exp(t B_0) \, exp(t B_1) \, exp(t B_2)  \geq 0$$. 

Let $\omega_0 \in X$, $\omega_0(x) \geq 0 \, \text{ a.e. on} \,  (-\infty,\infty)$, and $\int_{-\infty}^{\infty} \omega_0(x) dx \,= \, 1$.
Then,
$$\int_{- \infty}^{\infty} (B_{0} \omega_0) (x) dx
 \, = \, \beta_0 \, \Bigg( \int_{- \infty}^{\infty} \Bigg( \omega_0 (x + s) - 2 \omega_0 (x ) +  \omega_0 (x - s) \bigg)dx \, = \, 0,$$
$$\int_{- \infty}^{\infty} (B^2_{0} \omega_0) (x) dx
 \, = \, \beta_0 \, \Bigg( \int_{- \infty}^{\infty} \bigg( \omega_0 (x + 2 s) - 2 \omega_0 (x+ s) +  \omega_0 (x ) \bigg) \hspace{1.5in}$$
$$ - 2 \bigg( \omega_0 (x+ s) -2 \omega_0(x) \, + \omega_0(x -s) \bigg)
 + \bigg( \omega_0 (x -2  s) - 2 \omega_0 (x- s) +  \omega_0 (x) \bigg) \Bigg) dx \, = \, 0,$$
$$ \int_{- \infty}^{\infty} (B^3_{0} \omega_0) (x) dx = 0, \,  \dots. \hspace{3.5in}$$
Similarly, 
$$ \int_{- \infty}^{\infty} (B^n_{1} \omega_0) (x) dx = 0, \, \int_{- \infty}^{\infty} (B^n_{2} \omega_0) (x) dx = 0, \, n =1,2,3 \dots.$$
Then, 
$$\int_{- \infty}^{\infty} (exp(t B_0) \omega_0) (x) dx 
\, = \, \sum_{n=0}^{\infty} \frac{t^n}{n!}  \int_{- \infty}^{\infty} (B_0^n \omega_0) (x) dx
\, = \,  \int_{- \infty}^{\infty}  \omega_0 (x) dx \, = \, 1.$$
Similarly, 
$$\int_{- \infty}^{\infty} (exp(t B_1) \omega_0) (x) dx \, = \, 1, \, \text{ and } \,  \int_{- \infty}^{\infty} (exp(t B_2) \omega_0) (x) dx \, = \, 1.$$
From (\ref{Eq3.7})
$$\int_{- \infty}^{\infty} (T_{\alpha}(t) \omega_0) (x) dx \, = \, 1, \, t \geq 0.$$ 
Thus, for $t \geq 0$,
$$ \int_{- \infty}^{\infty} T(t) \omega_0 dx\, = \, \int_{- \infty}^{\infty} T_{\alpha}(t) \, exp(t B_0) \, exp(t B_1) \, exp(t B_2) dx \, = \, 1$$. 
\end{proof}

\section{Numerical simulations of the models}
\label{s:numerical}

In this section we provide numerical simulations of the solutions $\rho(x,t)$ of the SE model (\ref{Eq2.3})  
and the solutions $\omega(x,t)$ of the NLAD model (\ref{Eq3.4}). 
The SE solutions and the NLAD solutions are similar, depending on the values of parameters and the values of $t$.
There are two significant differences in the model outputs, which we will demonstrate in the simulations.

One difference is in the spacing of the local minima and maxima in the fringe patterns in the first phase. This spacing is very regular in the solutions of the NLAD model, but irregular in the solutions of the SE model. Another difference is in the second phase, where the solutions of the SE model demonstrate the space-dilation property, but the solutions of the NLAD model elevate above the $x$-axis and demonstrate a dispersion property of the fringe pattern peaks.

In Figures (\ref{Fig4}), (\ref{Fig5}), (\ref{Fig6}), (\ref{Fig7}),  simulations  are given for the first phase of the SE and NLAD models at four different time values.
The spacing of peaks for the NLAD model is very regular, whereas the spacing of peaks for the SE model is very irregular. In Figure (\ref{Fig7}) the spacing of the local minima of $\omega(x,1 / \pi)$ occurs regularly at $x \approx .5, 1.5, 2.5, 3.5, \dots$. The spacing of local minima of $\rho(x,1 / \pi)$ occurs irregularly  at $x \approx .5, 1.5,  2.5, 3.5, \dots, 7.5, 8.5, 9.25, 10.0, 10.75, 11.5, 12.5, \dots$, $17.5, 18.5, 19.25, 20.0$, $20.75, 21.5, \dots$.

\begin{figure}[htp]
\begin{center}
\includegraphics[width=4.6in,height=4.3in]{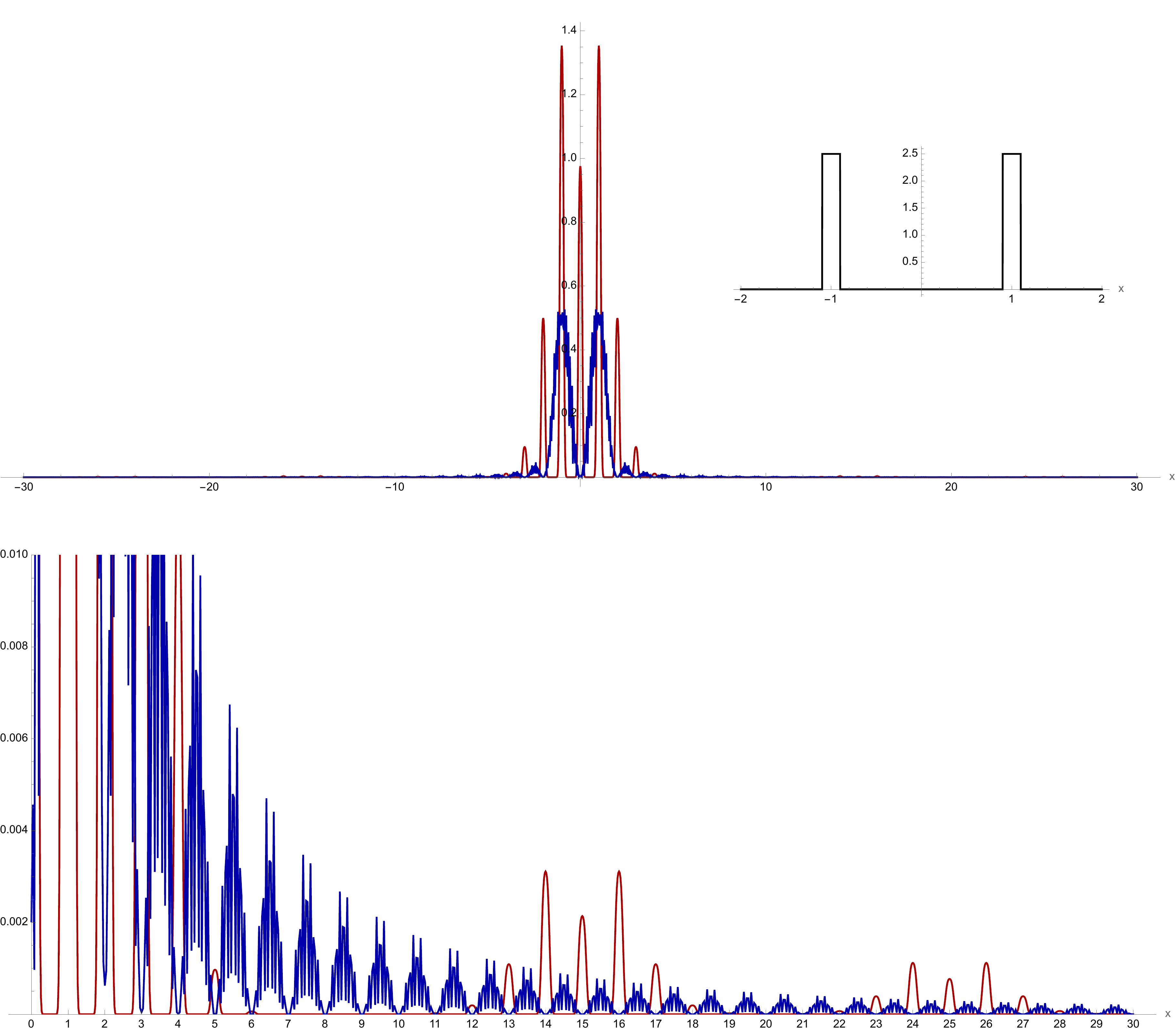}\\
\caption{The interference diffraction pattern of the solution $\rho(x,t)$ of SE  (blue) in equation (\ref{Eq2.3}) and the solution $\omega(x,t)$ of NLAD
(red) in equation (\ref{Eq3.4}).
$\alpha = 1/\pi^3$, $s = 1$, $b =.1$, $\beta_0 = 1 / (8 \, b^2)$, $\beta_1 = \pi / (2 \, b \, 15^2)$, $\beta_2 =  \pi / (2 \, b \, 25^2)$, $t = .1 / \pi$.
The bottom graphs are $log(\rho(x,t))$  (blue) and $log(\omega(x,t))$ (red).}
\label{Fig4}
\end{center}
\end{figure}

\begin{figure}[htp]
\begin{center}
\includegraphics[width=4.6in,height=4.3in]{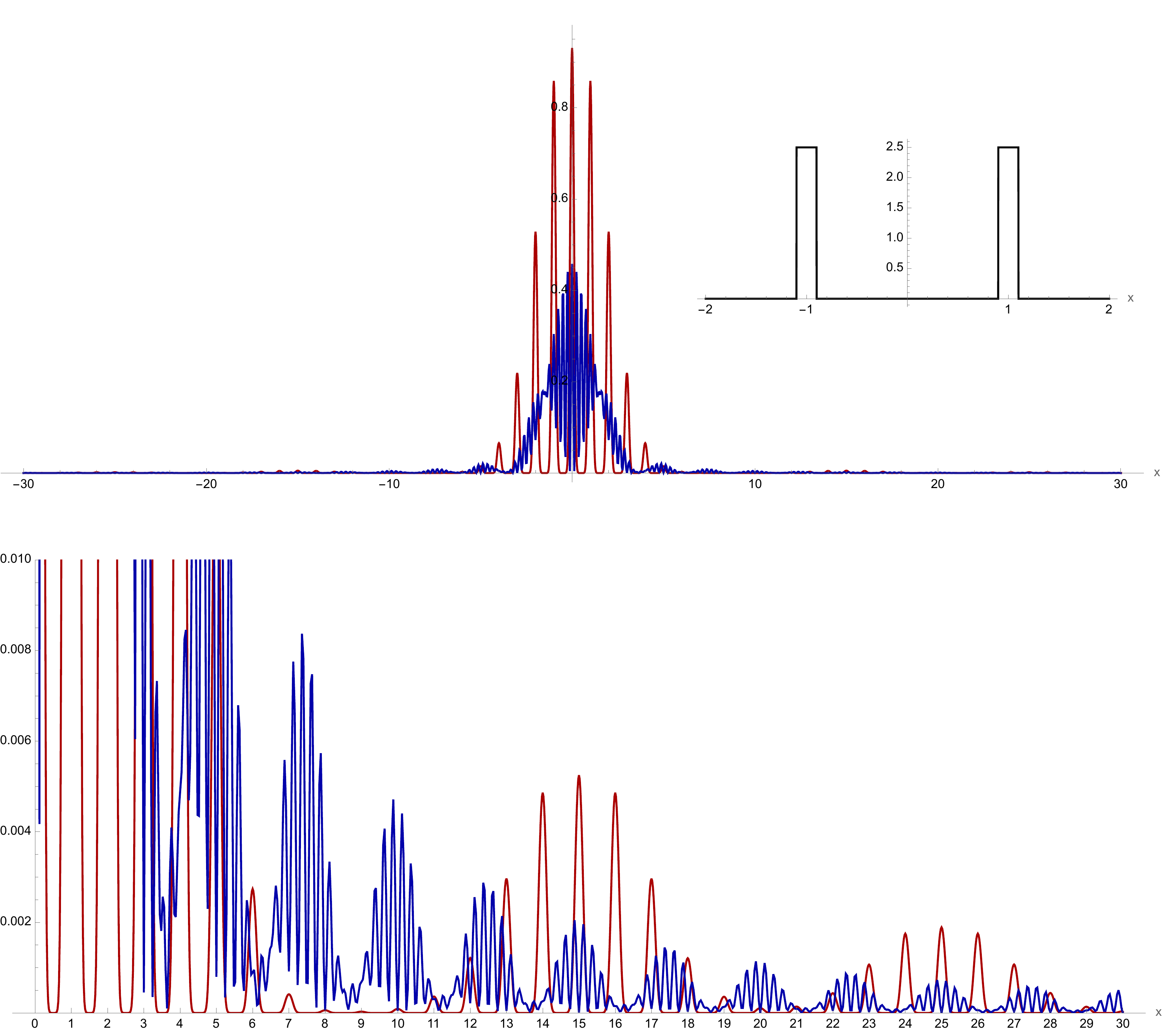}\\
\caption{The interference diffraction pattern of the solution $\rho(x,t)$ of SE  (blue) in equation (\ref{Eq2.3}) and the solution $\omega(x,t)$ of NLAD
(red) in equation (\ref{Eq3.4}).
$\alpha = 1/\pi^3$, $s = 1$, $b =.1$, $\beta_0 = 1 / (8 \, b^2)$, $\beta_1 = \pi / (2 \, b \, 15^2)$, $\beta_2 =  \pi / (2 \, b \, 25^2)$, $t = .25 / \pi$.
The bottom graphs are $log(\rho(x,t))$  (blue) and $log(\omega(x,t))$ (red).}
\label{Fig5}
\end{center}
\end{figure}

\begin{figure}[htp]
\begin{center}
\includegraphics[width=4.6in,height=4.3in]{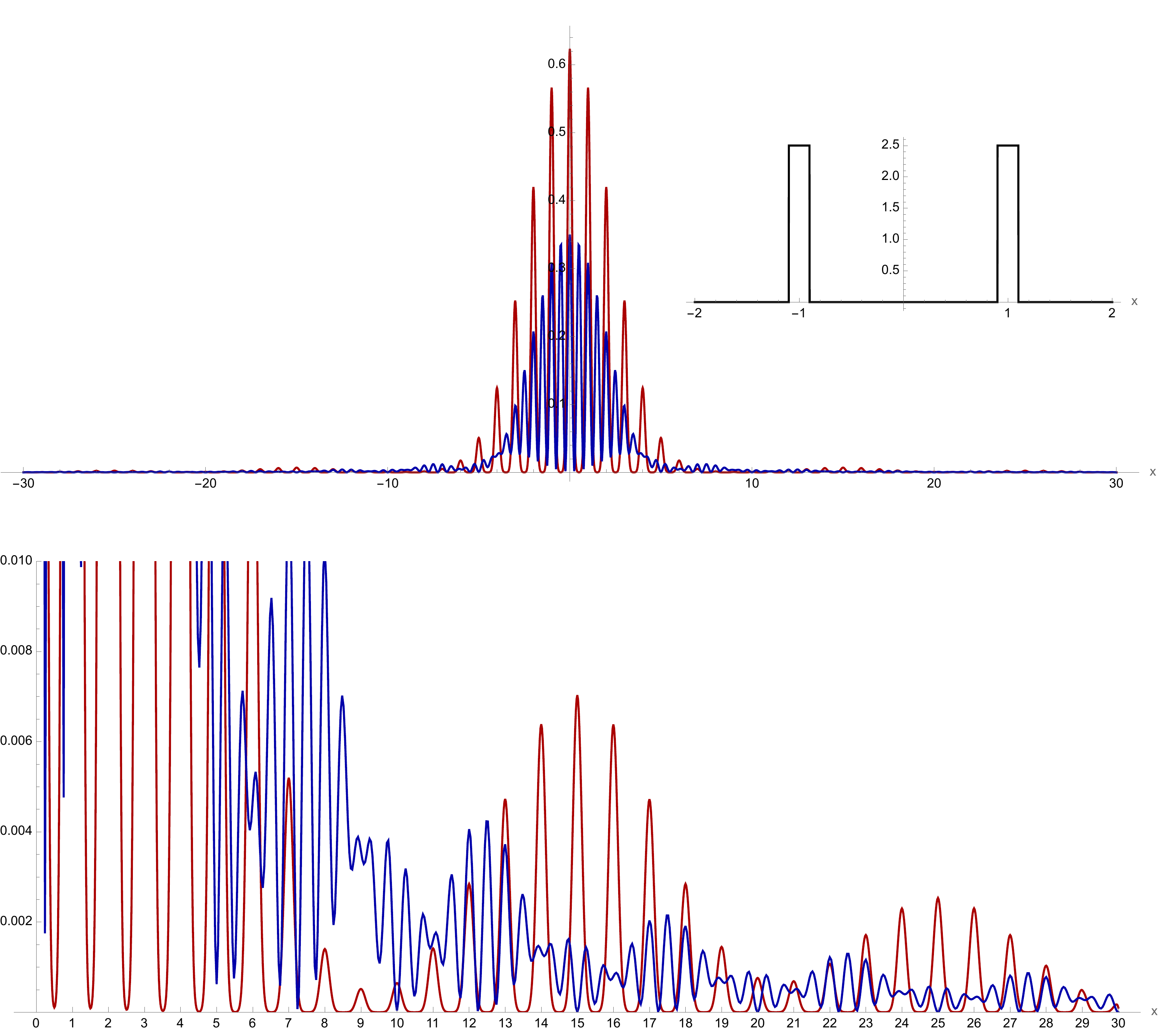}\\
\caption{The interference diffraction pattern of the solution $\rho(x,t)$ of SE  (blue) in equation (\ref{Eq2.3}) and the solution $\omega(x,t)$ of NLAD
(red) in equation (\ref{Eq3.4}).
$\alpha = 1/\pi^3$, $s = 1$, $b =.1$, $\beta_0 = 1 / (8 \, b^2)$, $\beta_1 = \pi / (2 \, b \, 15^2)$, $\beta_2 =  \pi / (2 \, b \, 25^2)$, $t = .5 / \pi$.
The bottom graphs are $log(\rho(x,t))$  (blue) and $log(\omega(x,t))$ (red).}
\label{Fig6}
\end{center}
\end{figure}

\begin{figure}[htp]
\begin{center}
\includegraphics[width=4.6in,height=4.3in]{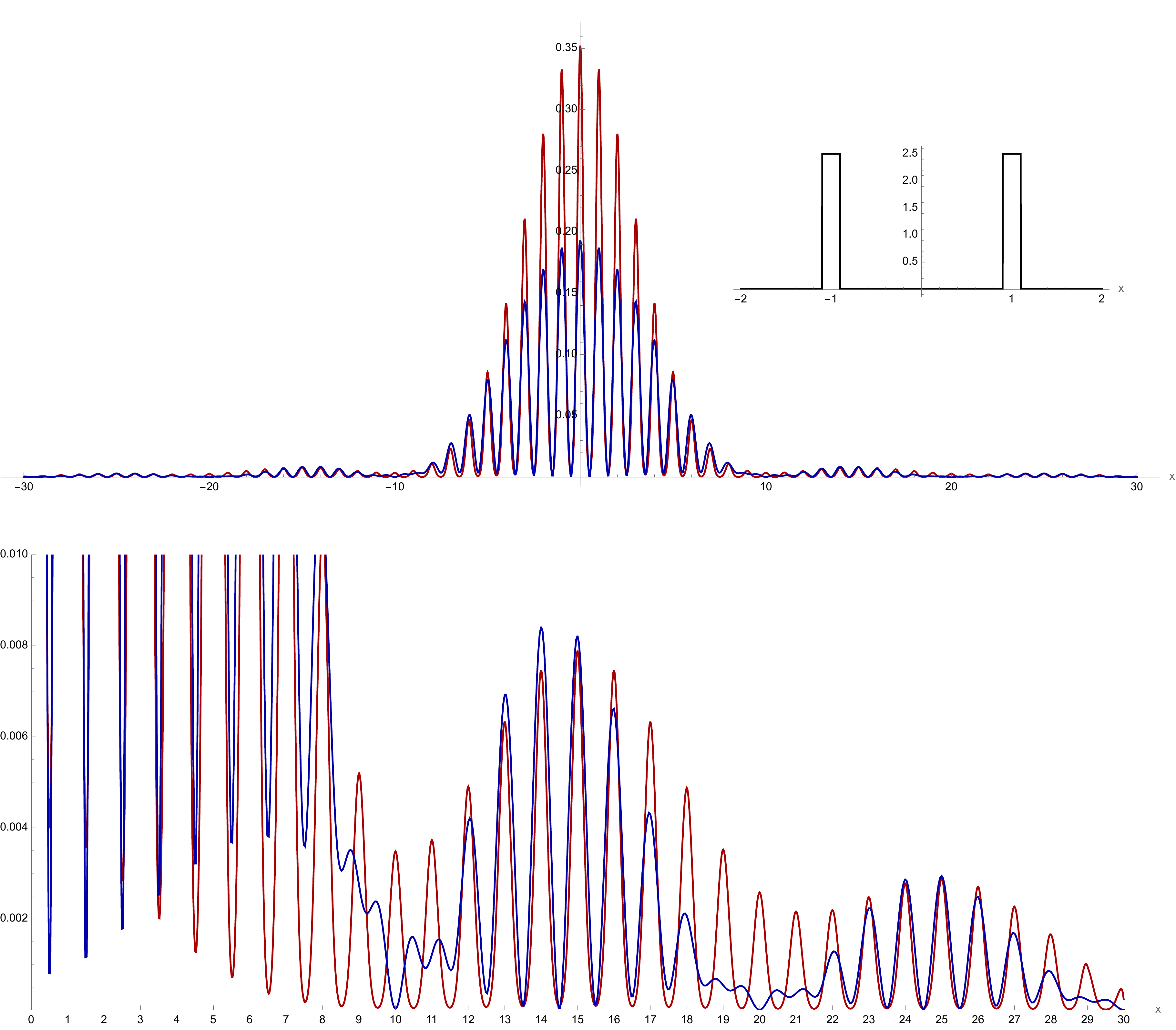}\\
\caption{The interference diffraction pattern of the solution $\rho(x,t)$ of SE  (blue) in equation (\ref{Eq2.3}) and the solution $\omega(x,t)$ of NLAD
(red) in equation (\ref{Eq3.4}).
$\alpha = 1/\pi^3$, $s = 1$, $b =.1$, $\beta_0 = 1 / (8 \, b^2)$, $\beta_1 = \pi / (2 \, b \, 15^2)$, $\beta_2 =  \pi / (2 \, b \, 25^2)$, $t = 1.0 / \pi$.
The bottom graphs are $log(\rho(x,t))$  (blue) and $log(\omega(x,t))$ (red).}
\label{Fig7}
\end{center}
\end{figure}

In Figures (\ref{Fig8}), (\ref{Fig9}), (\ref{Fig10}), (\ref{Fig11}), 
simulations  are given for the second phase of the SE and NLAD models at four different time values $t_2 = 2 / \pi, \, t_3 = 3 / \pi,  \, t_4 = 4 / \pi, \, t_6 = 6 / \pi$. In these simulations for the second phase, the solutions $\omega(x,t)$ of the nonlocal advection-diffusion equation (\ref{Eq3.4}) are space-time dilated according to the formulas
$
\hat{\omega}(x,t_2) =\frac{1}{2} \, \omega(\frac{x}{2}, t_2), \, \, \, \,
\hat{\omega}(x,t_3) =\frac{1}{3} \, \omega(\frac{x}{3}, t_3), \, \, \, \,
\hat{\omega}(x,t_4) =\frac{1}{4} \, \omega(\frac{x}{4}, t_4), \, \, \, \,
\hat{\omega}(x,t_6) =\frac{1}{6} \, \omega(\frac{x}{6}, t_6).
$
This dilation of $\omega(x,t)$ to $\hat{\omega}(x,t)$ in the second phase corresponds to an extension of the fringe pattern established in the first phase, with increasing distance of the detection plate.
The simulations $\rho(x,t)$ of the SE model exhibit the space-time dilation property, with local minima remaining on the $x$-axis. The simulations 
$\omega(x,t)$ of the NLAD model exhibit an elevation of local minima above the $x$-axis and a dissipation of the fringe pattern peaks.

\begin{figure}[htp]
\begin{center}
\includegraphics[width=4.8in,height=2.3in]{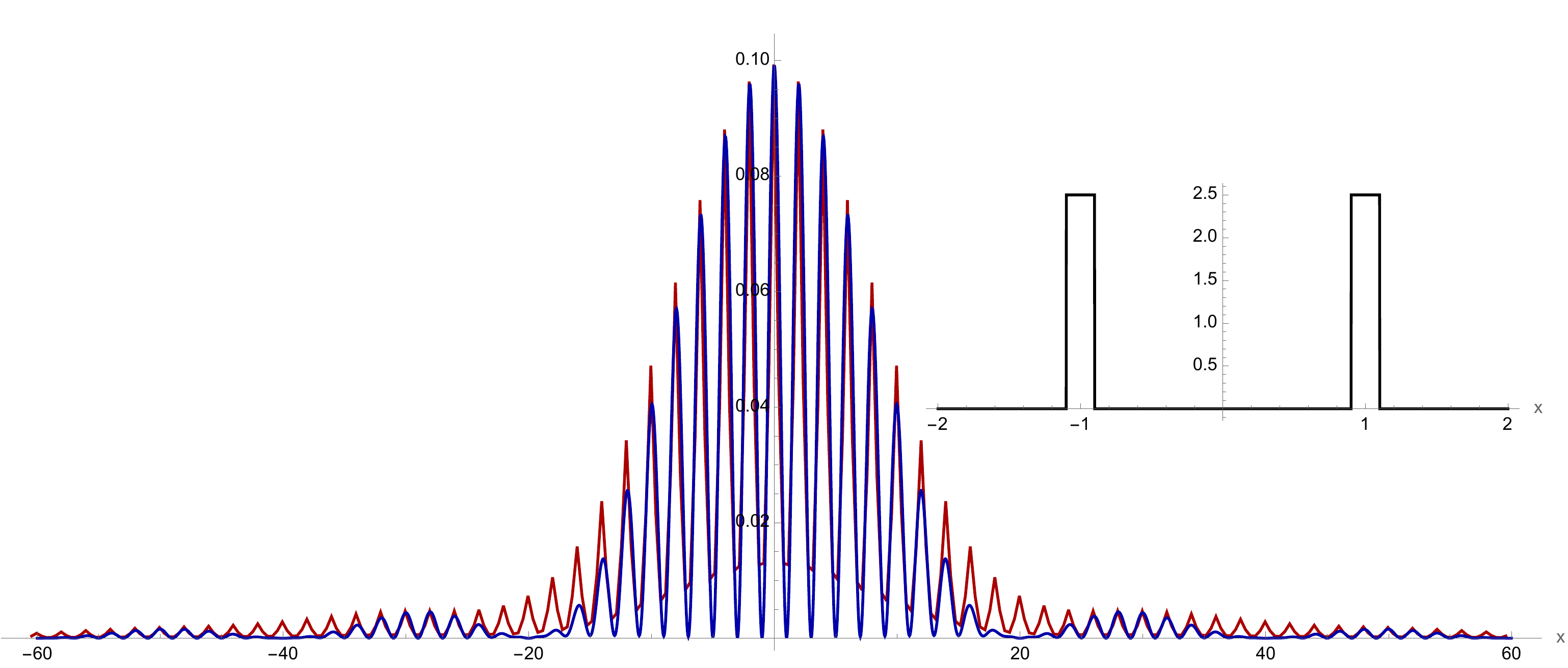}\\
\caption{The interference diffraction pattern of the solution $\rho(x,t_2)$ of SE  (blue) in equation (\ref{Eq2.3}) and the dilated solution $\hat{\omega}(x,t_2)$ 
(red) in the NLAD equation (\ref{Eq3.4}).
$\alpha = 1/\pi^3$, $s = 1$, $b =.1$, $\beta_0 = 1 / (8 \, b^2)$, $\beta_1 = \pi / (2 \, b \, 15^2)$, $\beta_2 =  \pi / (2 \, b \, 25^2)$, 
$t_2 = 2.0 / \pi$.}
\label{Fig8}
\end{center}
\end{figure}

\begin{figure}[htp]
\begin{center}
\includegraphics[width=4.8in,height=2.3in]{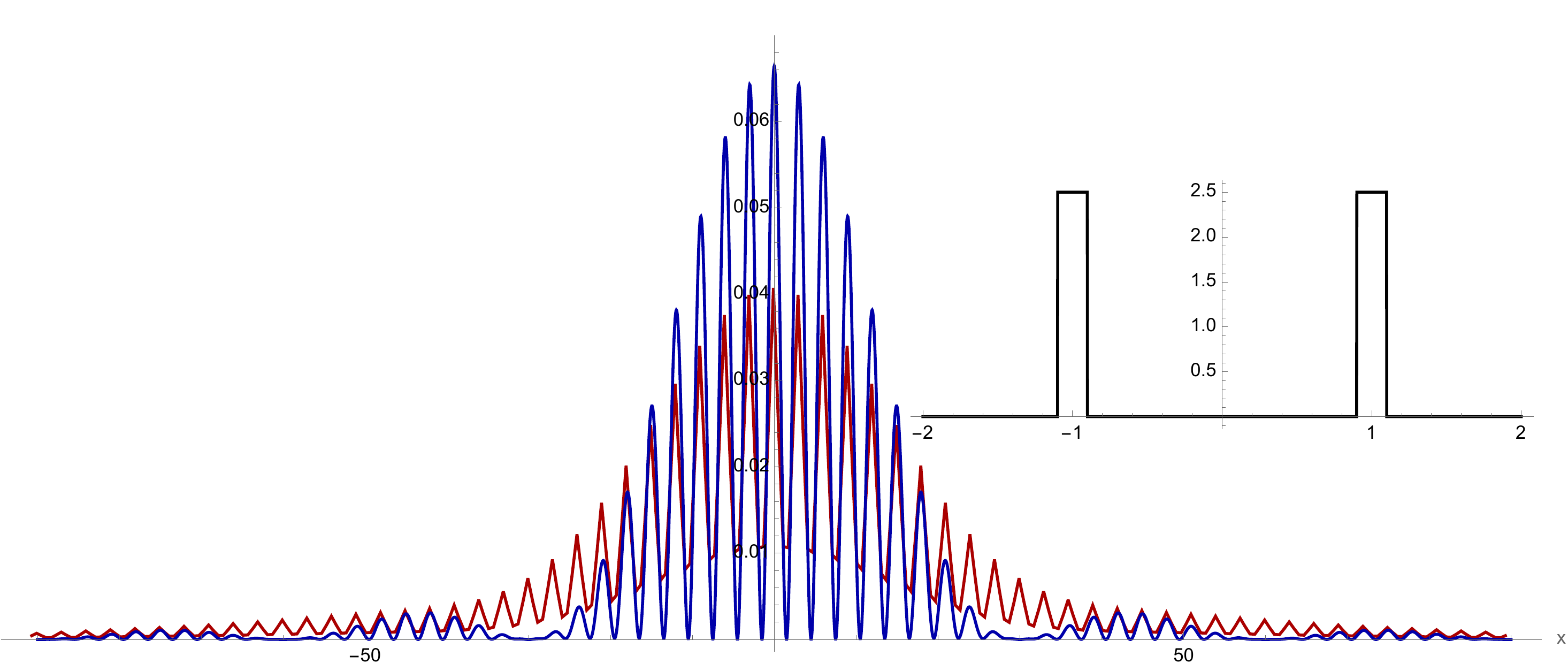}\\
\caption{The interference diffraction pattern of the solution $\rho(x,t_3)$ of SE  (blue) in equation (\ref{Eq2.3}) and the dilated solution $\hat{\omega}(x,t_3)$ 
(red) in the NLAD equation (\ref{Eq3.4}).
$\alpha = 1/\pi^3$, $s = 1$, $b =.1$, $\beta_0 = 1 / (8 \, b^2)$, $\beta_1 = \pi / (2 \, b \, 15^2)$, $\beta_2 =  \pi / (2 \, b \, 25^2)$, 
$t_3 = 3.0 / \pi$.}
\label{Fig9}
\end{center}
\end{figure}

\begin{figure}[htp]
\begin{center}
\includegraphics[width=4.8in,height=2.3in]{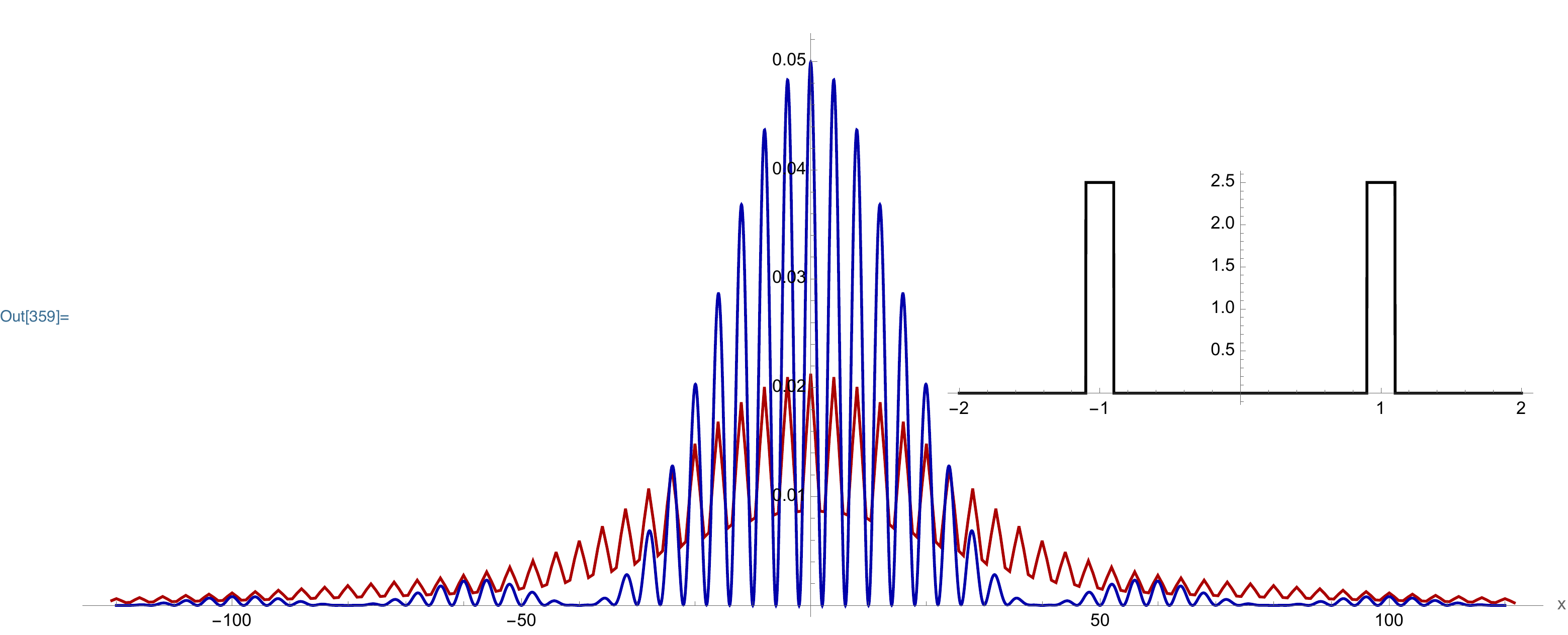}\\
\caption{The interference diffraction pattern of the solution $\rho(x,t_4)$ of SE  (blue) in equation (\ref{Eq2.3}) and the dilated solution $\hat{\omega}(x,t_4)$ 
(red) in the NLAD equation (\ref{Eq3.4}).
$\alpha = 1/\pi^3$, $s = 1$, $b =.1$, $\beta_0 = 1 / (8 \, b^2)$, $\beta_1 = \pi / (2 \, b \, 15^2)$, $\beta_2 =  \pi / (2 \, b \, 25^2)$, 
$t_4 = 4.0 / \pi$.}
\label{Fig10}
\end{center}
\end{figure}

\begin{figure}[htp]
\begin{center}
\includegraphics[width=4.8in,height=2.3in]{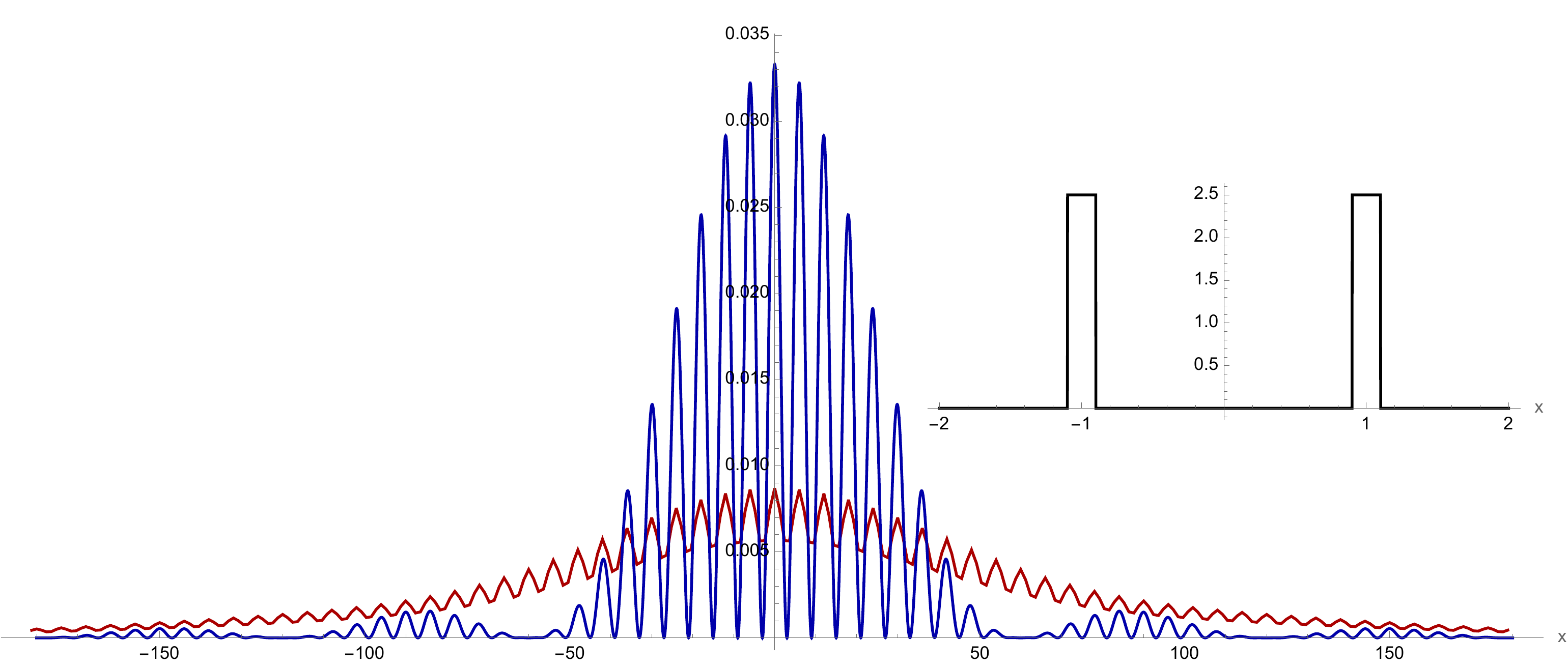}\\
\caption{The interference diffraction pattern of the solution $\rho(x,t_6)$ of SE  (blue) in equation (\ref{Eq2.3}) and the dilated solution $\hat{\omega}(x,t_6)$ 
(red) in the NLAD equation (\ref{Eq3.4}).
$\alpha = 1/\pi^3$, $s = 1$, $b =.1$, $\beta_0 = 1 / (8 \, b^2)$, $\beta_1 = \pi / (2 \, b \, 15^2)$, $\beta_2 =  \pi / (2 \, b \, 25^2)$, 
$t_6 = 6.0 / \pi$.}
\label{Fig11}
\end{center}
\end{figure}

\newpage
\section{Summary}
\label{s:summary}

We have developed a nonlocal advection-diffusion model NLAD for the two-slit experiment of quantum mechanics, in which quantum particles are projected forward, one at a time, through two stits, and detected downstream on a detection surface. Our work here is an extension of the models in \cite{Webb2} and \cite{Webb3}, which allowed higher order fringe pattern levels observed in experiments (Figure \ref{Fig2}). We compare the NLAD  equation model to the Schr\"odinger equation model SE with initial data consisting of two rectangular steps. 

In both formulations NLAD and SE, there is a two-phase development of the multi-level fringe pattern.  In the first phase the initial data transitions to an established multi-level fringe pattern with local minima located approximately on the $x$-axis. This transition is very simple for the NLAD model, but very complex for the SE model.

In the second phase the multi-level fringe pattern established in the first phase evolves in a space-time dilation in both models. In the second phase of the SE model, the pattern established in the first phase is preserved almost perfectly in the space-time dilation with constant speed. In the second phase of the NLAD model, the pattern established in the first phase dissipates, with local minima rising above the $x$-axis and with magnitude of the fringe pattern oscillations decreasing.

The SE formulation and the NLAD formulation of the 2-slit experiment have very different interpretations. The interpretation of the SE model is that an individual  quantum particle exists as a wave moving forward in space. The interpretation of the NLAD model is that an individual quantum particle exists as a component of an ensemble, with its forward movement influenced by nonlocal reaction to its environment within a sensing radius of its spatial position.

Scientific understanding of the two-slit experiment, which is one of the most fundamental experiments in science, requires mathematical formulations, which provide descriptive connection to experiments, and interpretation of physical processes. The two formulations SE and NLAD can be compared for these purposes.

\newpage

\end{document}